\theoremstyle{definition}
\newtheorem{definition}{Definition}
\newtheorem{theorem}{Theorem}
\begin{document}
\title{Social Welfare Maximization in cross-silo Federated Learning}
%

%
%
\name{Jianan Chen$^{\dag}$ \qquad Qin Hu$^{\dag\star}$ \qquad Honglu Jiang$^{\ddag}$}
\address{$^{\dag}$ Department of Computer and Information Science \\
Indiana University Purdue University Indianapolis, Indianapolis, IN, USA\\
      $^{\ddag}$ Department of Informatics and Engineering Systems\\
      The University of Texas Rio Grande Valley, Brownsville, TX, USA
}

\maketitle
\begin{abstract}
As one of the typical settings of Federated Learning (FL), cross-silo FL allows organizations to jointly train an optimal Machine Learning (ML) model. In this case, some organizations may try to obtain the global model without contributing their local training, lowering the social welfare. In this paper, we model the interactions among organizations in cross-silo FL as a public goods game for the first time and theoretically prove that there exists a social dilemma where the maximum social welfare is not achieved in Nash equilibrium. To overcome this social dilemma, we employ the Multi-player Multi-action Zero-Determinant (MMZD) strategy to maximize the social welfare. With the help of the MMZD, an individual organization can unilaterally control the social welfare without extra cost. Experimental results validate that the MMZD strategy is effective in maximizing the social welfare.
\end{abstract}
                                                
\begin{keywords}
Federated learning, Public goods game, Zero-determinant strategy, Social welfare, Game theory
\end{keywords}
\section{Introduction}
\label{sec:Intro}
In Federated Learning (FL), clients cooperatively train a Machine Learning (ML) model with their decentralized datasets under the coordination of a central server\cite{mcmahan2017communication}. One of the typical settings of FL is cross-silo FL\cite{kairouz2019advances} where a neutral third-party agent acts as the central server and clients are a group of organizations, aiming to jointly train an optimal ML model for their respective use. In this case, these organizations are also the owners of the global model and can utilize the well-trained global model to further process tasks for their own interests. 
\footnotemark[1]
\footnotetext[1]{This work is partly supported by the US NSF under grant CNS-2105004.}
\footnotetext[2]{${\star}$ Corresponding author.}
An optimal global model with high performance requires the organizations in cross-silo FL to collaborate efficiently so as to bring considerable benefits to all participants, which can be regarded as the maximization of the social welfare. In fact, there are many studies on optimizing the social welfare in cross-silo FL to directly improving the model performance\cite{huang2021personalized,wang2021efficient,nandury2021cross,majeed2020cross}, increasing the convergence speed\cite{marfoq2020throughput}, reducing the communication cost\cite{zhang2020batchcrypt}, protecting privacy\cite{heikkila2020differentially,li2021practical,long2021federated} and security\cite{jiang2021flashe}, etc.

However, since every organization participating in cross-silo FL can obtain the final global model regardless of its contribution, the well-trained model becomes a public good, which is non-excludable and non-rivalrous for all organizations\cite{tang2021incentive}. This leads to the emergence of selfish behaviors that some organizations may only consider their own interests via inactively participating in local training while obtaining the final global model for free or at a lower cost. The spread of this behavior can result in huge loss of the social welfare, and then none of the organizations can get the optimal model, which compromises the long-term stability and sustainability of cross-silo FL.

The existing studies motivate organizations to fully contribute to cross-silo FL by designing an incentive mechanism\cite{tang2021incentive}. However, this requires extra negotiation costs since organizations need to reach a consensus on the mechanism in advance, and demands additional running costs where a distributed algorithm runs over all organizations, clearly adding more burden to organizations.

In this paper, we take a brand-new approach using the Multi-player Multi-action Zero-Determinant (MMZD) strategy\cite{he2016zero} to maximize the social welfare in cross-silo FL without causing additional costs for organizations. Another outstanding advantage of our method is that it can control the social welfare and be applied to any cross-silo FL scenario no matter what strategies or actions organizations perform. In summary, our contributions include:

\noindent 1) We model the interactions among organizations in cross-silo FL as a public goods game for the first time, focusing on the organization's strategy rather than designing an extra mechanism to solve the social welfare maximization problem.

\noindent 2) We reveal the existence of the social dilemma in cross-silo FL by mathematical proof for the first time, which demonstrates the adverse effect of selfish behaviors in cross-silo FL from the perspective of game theory. This can be used as a theoretical basis for exploring organizations' behaviors in cross-silo FL.

\noindent 3) We overcome the social dilemma by employing the MMZD strategy from view of an individual organization. Specifically, any organization can unilaterally maximize the social welfare, which ensures the social welfare in cross-silo FL at a certain level and maintains the stability of the system. This approach can also extend the application domains of the MMZD.

\noindent 4) Experiments prove the effectiveness of the MMZD strategy in maximizing the social welfare.

\section{System Model} \label{Modeling}
We consider a cross-silo FL scenario with a set of organizations, denoted as $\mathcal{N}=\{1,2,...,N\}$. All organizations rely on a central server to collaboratively conduct global model training for a specific task, where each of them has their own data for local training. The goal of organizations is to obtain an optimal global model, minimizing the loss with all datasets. The central server collects the results of local model updates from all organizations, aggregates to obtain the global model, and then distributes it to everyone for the next round of training. 

In each round of local training, every organization performs $K$ iterations of model training. We denote the number of global aggregations as $r$. For the current task, the strategy of organization $i\in\mathcal{N}$, denoted as $y_i\in\{0,1,...,r\}$, represents the number of global aggregation rounds it participates in the task. Then, $\mathbf{y}=(y_1,...,y_i,...,y_N)$ denotes the strategy vector of all organizations. Here we assume that all organizations in this cross-silo FL may participate in fewer global aggregations due to laziness or selfishness, but they do not carry out malicious attacks, such as model poisoning attack.

According to the cross-silo FL model, all organizations get the same model in return. Inspired by \cite{tang2021incentive}, we define the revenue of organization $i$ as
\begin{small}
\begin{equation}\label{eq:revenue}
\Phi_i(\mathbf{y})=m_i(\chi_0-\chi(\mathbf{y})),
\end{equation}
\end{small}where $m_i$ denotes the unit revenue of organization $i$ by using the returned final model, $\chi_0$ denotes the precision of the untrained model, and $\chi(\mathbf{y})$ denotes the precision of the trained global model with corresponding strategy vector $\mathbf{y}$. $\chi(\mathbf{y})$ can be modeled as $\chi(\mathbf{y})=\frac{\theta_0}{\theta_1+K \sum_{i\in\mathcal{N}} y_i}$, with positive coefficients $\theta_0$ and $\theta_1$ \cite{li2019convergence} being derived based on the loss function, neural network, and local datasets. In particular, we have $\chi_0=\frac{\theta_0}{\theta_1}$. The revenue of each organization is proportional to the difference between the expected loss after $r$ rounds aggregation (i.e.,$\chi(\mathbf{y})$) and the minimum expected loss (i.e.,$\chi_0$)\cite{tang2021incentive}. As the number of total participation rounds increases, the marginal decrease of the difference reduces.

We define the cost of organization $i$ as $\Psi_i(y_i)=C^{i}_p+C^{i}_m$. The cost is composed of the organization's computation cost $C^{i}_p$ and its communication cost $C^{i}_m$. On the one hand, the computation cost $C^{i}_p=\beta_i K y_i$, where $\beta_i$ is a positive parameter, denoting the computation cost of each iteration in organization $i$'s local training\footnote{As \cite{tran2019federated} shows, $\beta_i=\frac{\alpha_i}{2} f^2_i D_i S_i$, where $\frac{\alpha_i}{2}$ is the effective capacitance coefficient of organization $i$'s computing chipset, $f_i$ denotes the calculation processing capacity, $D_i$ denotes the number of data units, and $S_i$ denotes the number of CPU cycles required by organization $i$ to process one data unit.}. On the other hand, the communication cost $C^{i}_m$ is defined as a parameter since we assume that every organization uploads its updates in each global aggregation round. If 
it chooses not to participate in global aggregation, it will submit a zero vector as updates.

Then the utility of organization $i$ is defined as
the difference between its revenue and cost:
\begin{small}
\begin{equation}\label{eq:utility}
U^i(\mathbf{y})=\Phi_i(\mathbf{y})-\Psi_i(y_i).
\end{equation}
\end{small}According to previous statements, we model the interactions among organizations as a $\emph{cross-silo FL game}$. 
\begin{definition}
(Cross-silo FL game). In the cross-silo FL game, organizations participating in cross-silo FL act as players, where their actions and utilities are $y_i$ and $U^i(\mathbf{y})$, respectively.
\end{definition}
The cross-silo FL game can be iterative since these organizations in cross-silo FL usually cooperate for a long time to finish multiple FL tasks. Each game round in the cross-silo FL game corresponds to a certain FL task. Moreover, the social welfare in cross-silo FL game can be denoted as the total utility of all organizations $i$, namely $\Sigma^N_{i=1} U^i(\mathbf{y})$. In the cross-silo FL game, we find that the social dilemma occurs if $\Phi_i(\mathbf{y})-C^{i}_p<0$, which can be summarized as below.
\begin{theorem}\label{th:SD}
(Social dilemma). If $\Phi_i(\mathbf{y})-C^{i}_p<0$, there exists a social dilemma in the cross-silo FL game.
\end{theorem}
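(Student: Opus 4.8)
The plan is to treat the cross-silo FL game as a public goods game and to read ``social dilemma'' in the sense announced in the abstract: the Nash equilibrium fails to attain the maximum social welfare. The whole argument hinges on the observation that $\Phi_i(\mathbf{y})=m_i(\chi_0-\chi(\mathbf{y}))$ depends on $\mathbf{y}$ only through the aggregate $S=\sum_{j\in\mathcal{N}}y_j$, and that $\chi(\mathbf{y})=\theta_0/(\theta_1+KS)$ is strictly decreasing and convex in $S$. Hence every unit any organization contributes is a pure, non-excludable public good whose benefit accrues to all players simultaneously, while the computation cost $C^i_p=\beta_i K y_i$ is borne privately. First I would rewrite $\chi_0-\chi(\mathbf{y})=\theta_0 KS/\big(\theta_1(\theta_1+KS)\big)$ and note that $C^i_m$ is a fixed surcharge independent of $y_i$, so that the strategic part of each utility is exactly $\Phi_i-C^i_p$.

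The next step is to locate the Nash equilibrium. I would show that the hypothesis $\Phi_i(\mathbf{y})-C^i_p<0$ forces each organization's best response to be minimal participation. Concretely, when organization $i$ lowers $y_i$ by one unit its cost falls by $\beta_i K$ while its revenue falls by only $m_i\theta_0 K/\big((\theta_1+K(S-1))(\theta_1+KS)\big)$; dividing the hypothesis by $Ky_i$ and using the elementary bound $S\ge y_i$ together with $\theta_1+K(S-1)\ge\theta_1$ shows that this marginal revenue loss is strictly smaller than $\beta_i K$. Thus every downward deviation is profitable, and iterating it drives each organization toward $y_i=0$; the all-defect profile $\mathbf{y}=\mathbf{0}$ is therefore a Nash equilibrium, at which $\chi(\mathbf{0})=\chi_0$, every $\Phi_i=0$, and the social welfare equals $-\sum_j C^j_m$.

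I would then argue that this equilibrium is not socially optimal. The social welfare is $W(\mathbf{y})=\big(\sum_j m_j\big)\big(\chi_0-\chi(\mathbf{y})\big)-K\sum_j\beta_j y_j-\sum_j C^j_m$, so the social marginal benefit of a single contribution is scaled by the \emph{aggregate} coefficient $\sum_j m_j$ rather than the private $m_i$. Because $\sum_j m_j$ can be far larger than any single $m_i$, the inequality $\big(\sum_j m_j\big)\theta_0/\big(\theta_1(\theta_1+K)\big)>\beta_i$ may hold even when the private return is negative; under it, raising one organization's contribution above zero strictly increases $W$, so $\mathbf{y}=\mathbf{0}$ does not maximize the social welfare. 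Combining this with the previous step yields the dilemma: individual rationality pins the game at all-defect while the optimum lies at a strictly cooperative profile.

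The step I expect to be the main obstacle is the Nash-equilibrium characterization, because reducing $y_i$ shrinks both the private revenue and the private cost, so the sign of the deviation cannot simply be read off from the hypothesis. The crux is to convert the \emph{global} inequality $\Phi_i-C^i_p<0$ into a \emph{marginal} statement that stays valid at every step of the descent to $y_i=0$; the convexity of $\chi$ in $S$ and the bound $S\ge y_i$ are exactly what make the cost saving dominate the revenue loss along the whole path, and some care is needed because the marginal revenue is \emph{largest} at small $S$, so the last step is the binding one. A secondary subtlety is that establishing strict social suboptimality needs the mild externality condition $\big(\sum_j m_j\big)\theta_0/\big(\theta_1(\theta_1+K)\big)>\beta_i$, which I would either invoke as a standing assumption on the parameters or derive from the premise that cooperation is worthwhile for the group as a whole.
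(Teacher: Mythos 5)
Your proof is correct and follows essentially the same route as the paper's: use the hypothesis to show every downward deviation in $y_i$ is profitable (you via discrete unit deviations with the bounds $S\ge y_i$ and $\theta_1+K(S-1)\ge\theta_1$, the paper via the sign of the derivative of $U^i$ using the equivalent inequality $(\theta_1+K\sum_j y_j)^2\ge\theta_1(\theta_1+Ky_i)$), conclude the Nash equilibrium is all-defect with welfare $-\sum_j C^j_m<0$, and then exhibit a cooperative profile with higher welfare. The only real difference is in that last step: the paper invokes the blanket ``premise of FL'' that the fully trained model yields positive group utility to get $\sum_i U^i(\mathbf{y}^r)>0$, whereas you state the required extra parameter condition explicitly --- which is the more careful reading, since in neither version does social suboptimality follow from the stated hypothesis alone.
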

\begin{proof}
The social dilemma forms when the Nash equilibrium is not the point of maximum social welfare. First, we study the Nash equilibrium of the cross-silo FL game. Referring to (\ref{eq:utility}), we can derive the derivative of $U^i$ as $m_i\frac{K \theta_0}{(\theta_1+K \Sigma y_i)^2}- \beta_i K$. Given $\Phi_i(\mathbf{y})-C^{i}_p<0$, we have $m_i\frac{K y_i\theta_0}{(\theta_1+K y_i)\theta_1}-\beta_i K y_i<0$, which leads to $m_i\frac{K \theta_0}{(\theta_1+K \Sigma y_i )^2}<m_i\frac{K\theta_0}{(\theta_1+K y_i)\theta_1}<\beta_i K$.
Thus, the derivative of $U^i$ is negative, and the utility function decreases monotonically with $y_i$. The Nash equilibrium strategy of each organization is $y_i=0$, so the Nash equilibrium point is $\mathbf{y}^{NE}=(0,0,\dots,0)$. Noted that, there is a natural and necessary premise of FL that the utility of the well-trained model must be positive. Thus, we prove that the point $\mathbf{y}^{r}=(y_i=r,i\in\mathcal{N})$ with the social welfare $\Sigma^N_{i=1}U^i(\mathbf{y}^{r})=\sum \Psi_i(\mathbf{y}^{r})-\sum C^i_p-\sum C^i_m>0$ higher than that in the Nash equilibrium point $\Sigma^N_{i=1}U^i(\mathbf{y}^{NE})=-\sum C^i_m<0$. So the social dilemma exists if $\Phi_i(\mathbf{y})-C^{i}_p<0$.
\end{proof}

The condition $\Phi_i(\mathbf{y})-C^{i}_p<0$ in the above theorem indicates that if any organization $i\in\mathcal{N}$ only trains the local model using its local dataset, the utility is negative. In fact, this condition strengthens organizations' motivation to participate in global training in cross-silo FL game. 

\section{Social welfare maximization}\label{maxwelfare}
According to the analysis above, we can see that the underlying cause of the social dilemma is selfishness, leading to the loss of all organizations, namely the low social welfare. Aiming to solve this problem, we resort to the Multi-player Multi-action Zero-Determinant (MMZD) strategy for the social welfare maximization in this section. In each game round, any organization can choose the action $y_i\in\{0,1,...,r\}$, so there are $(r+1)^N$ possible outcomes for each game round. We assume that the organizations have one-round-memory since a long-memory player has no priority against others with short memory\cite{he2016zero}. For arbitrary organization $i\in\mathcal{N}$, its mixed strategy $\mathbf{p}^i$ is defined as:
\begin{small}
\begin{equation}
\mathbf{p}^i=[p^i_{1,0},p^i_{1,1},...,p^i_{1,r},p^i_{2,0},...,p^i_{j,g},...,p^i_{(r+1)^{N},r}]^T,
\end{equation}
\end{small}where $p^i_{j,g} (j\in\{1,2,...,(r+1)^N\},g\in\{0,1,...,r\})$ represents the probability of organization $i$ choosing action $y_i=g$ in the current game round and other organizations' choosing the same actions as $j$-th outcome of the previous game round. In addition, the corresponding utility vector $\mathbf{u}^i$ is denoted as:
\begin{small}
\begin{equation}
\mathbf{u}^i=[u^i_{1,0},u^i_{1,1},...,u^i_{1,r},u^i_{2,0},...,u^i_{j,g},...,u^i_{(r+1)^{N},r}]^T,
\end{equation}
\end{small}where each utility $u^i_{j,g}$ of organization $i$ choosing action $y_i=g$ in the $j$-th outcome can be calculated by $u^i_{j,g}=U^i(\mathbf{y}^{(j,g)})$, with $\mathbf{y}^{(j,g)}$ denoting the action vector $\mathbf{y}$ corresponding to the $j$-th outcome but $y_i=g$.

In the cross-silo FL model, an organization’s current move depends only on its last action and the strategy vector $\mathbf{y}$ in the last game round. We can construct a Markov matrix $\mathbf{M}=[M_{vw}]_{(r+1)^{N} \times (r+1)^{N}}$, with each element $M_{vw}$ denoting the one-step transition probability from state $v$ to $w$.

As demonstrated in\cite{he2016zero}, a certain column $\hat{\mathbf{p}}^i$ of $\mathbf{M}'\equiv \mathbf{M}-\mathbf{I}$ is only controlled by organization $i$'s strategy. We assume the stationary vector of $\mathbf{M}$ is $\mathbf{v}$, then organization $i$'s expected utility in the stationary state is:
\begin{small}
\begin{equation*}\label{eq:u_a}
E^i=\frac{\mathbf{v}\cdot\mathbf{u}^i}{\mathbf{v}\cdot\mathbf{1}}=\frac{\det (\mathbf{p}^1,\dots,\mathbf{p}^N,\mathbf{u}^i)}{\det (\mathbf{p}^1,\dots,\mathbf{p}^N,\mathbf{1})},
\end{equation*}
\end{small}which makes a linear combination of all organizations’ expected utilities yielding the following equation:
\begin{small}
\begin{equation}\label{eq:zd_equation}
\sum_{x=1}^{N}\alpha_x E^x+\alpha_0=\frac{\det (\mathbf{p}^1,\dots,\mathbf{p}^N,\sum_{x=1}^{N}\alpha_x \mathbf{u}^x+\alpha_0\mathbf{1})}{\det (\mathbf{p}^1,\dots,\mathbf{p}^N,\mathbf{1})}.
\end{equation}
\end{small}

In the above equation, $\alpha_0$ and $\alpha_x(x\in\mathcal{N})$ are constants. Thus, when organization $i$ chooses a strategy satisfying $\hat{\mathbf{p}}^i=\phi(\sum_{x=1}^{N}\alpha_x \mathbf{u}^x+\alpha_0\mathbf{1})$, where $\phi$ is a non-zero constant and $\hat{\mathbf{p}}^i$ is under the control of organization $i$, the corresponding column of $\hat{\mathbf{p}}^i$ and the last column of $\det (\mathbf{p}^1,\dots,\mathbf{p}^N,\sum_{x=1}^{N}\alpha_x \mathbf{u}^x+\alpha_0\mathbf{1})$ will be proportional. (\ref{eq:zd_equation}) becomes:
\begin{small}
\begin{equation}\label{eq:zd_zero}
\sum_{x=1}^{N}\alpha_x E^x+\alpha_0=0.
\end{equation}
\end{small}

We further study the social welfare maximization problem with the help of the MMZD in this circumstance. Take organization $1$ performing the MMZD strategy as an example. According to (\ref{eq:zd_zero}), by setting $\alpha_x=1(x\in\mathcal{N})$, the social welfare can be calculated as $\sum_{x=1}^{N} E^x=-\alpha_0$. Thus, the issue of maximizing the social welfare is equivalent to the following optimization problem:
\begin{small}
\begin{equation*}\label{eq:min_gamma}
\begin{split}
&\min \alpha_0,\\
&\,s.t.
\begin{cases}
0\le p^1_{j,g}\le 1,j\in\{1,2,...,(r+1)^N\},g\in\{0,...,r\},\\
\hat{\mathbf{p}}^1=\phi(\sum_{x=1}^{N} \mathbf{u}^x+\alpha_0\mathbf{1}),\\
\phi \neq 0.\\
\end{cases}
\end{split}
\end{equation*}
\end{small}We denote $u^x_k,k\in\{1,2,...,(r+1)^{N+1}\}$ as the $k$th element in $\mathbf{u}^x$, then we can solve the above optimization problem by considering the following two cases:\\
1) $\phi>0$. To meet the constraint $p^1_{j,g}\ge0$, we can get the lower bound of $\alpha_0$ as follows:
\begin{small}
\begin{align*}
& {\alpha_0}_{min}=\max(\Lambda_k),\forall{k}\in\{1,2,...,(r+1)^{N+1}\},\\
& \Lambda_k=
\begin{cases}   
-\sum_{x=1}^{N} u^x_k-\frac{1}{\phi}, &k=1,2,...(r+1)^{N-1},\\
-\sum_{x=1}^{N} u^x_k, & k=(r+1)^{N-1}+1,...,(r+1)^{N+1}.\\
\end{cases}
\end{align*}
\end{small}To meet the constraint $p^1_{j,g}\le1$, we can get the upper bound of $\alpha_0$ as follows:
\begin{small}
\begin{align*}
& {\alpha_0}_{max}=\min(\Lambda_l),\forall{l}\in\{(r+1)^{N+1}+1,...,2(r+1)^{N+1}\},\\
& \Lambda_l= \Lambda_{k+(r+1)^N}\\
& =
\begin{cases}
-\sum_{x=1}^{N} u^x_k, &k=1,2,...(r+1)^{N-1},\\
-\sum_{x=1}^{N} u^x_k+\frac{1}{\phi}, & k=(r+1)^{N-1}+1,...,(r+1)^{N+1}.\\
\end{cases}
\end{align*}
\end{small}Only if ${\alpha_0}_{min}\le{\alpha_0}_{max}$, can $\alpha_0$ have a feasible solution, which is equivalent to $\max(\Lambda_k)\le \min(\Lambda_l),\forall{k}\in\{1,2,...,(r+1)^{N+1}\},\forall{l}\in\{(r+1)^N+1,...,2(r+1)^{N+1}\}$. If there exists $\phi>0$ satisfying the above constraint, we can obtain the minimum value of $\alpha_0$ as follow:
\begin{small}
\begin{multline}\label{eq:case1_gammamin}
{\alpha_0}_{min}=\max\{-\sum_{x=1}^{N} u^x_1-\frac{1}{\phi},...,-\sum_{x=1}^{N} u^x_{(r+1)^{N-1}}-\frac{1}{\phi},\\
-\sum_{x=1}^{N} u^x_{(r+1)^{N-1}+1},...,-\sum_{x=1}^{N} u^x_{(r+1)^{N+1}}\}.
\end{multline}
\end{small}2) $\phi<0$. Similarly, when $p^1_{j,g}\ge0$, we have ${\alpha_0}_{min}=\max(\Lambda_l),\forall{l}\in\{(r+1)^{N+1}+1,...,2(r+1)^{N+1}\}$; considering $p^1_{j,g}\le1$, we have ${\alpha_0}_{max}=\min(\Lambda_k),\forall{k}\in\{1,2,...,(r+1)^{N+1}\}$. In addition, $\alpha_0$ is feasible only when ${\alpha_0}_{min}\le{\alpha_0}_{max}$, i.e., $\max(\Lambda_l)\le \min(\Lambda_k),\forall{k}\in\{1,2,...,(r+1)^{N+1}\},\forall{l}\in\{(r+1)^{N+1}+1,...,2(r+1)^{N+1}\}$. Finally, we can get the following result:
\begin{small}
\begin{multline}\label{eq:case2_gammamin}
{\alpha_0}_{min}=\max\{-\sum_{x=1}^{N} u^x_1,...,-\sum_{x=1}^{N} u^x_{(r+1)^{N-1}},\\
-\sum_{x=1}^{N} u^x_{(r+1)^{N-1}+1}+\frac{1}{\phi},...,-\sum_{x=1}^{N} u^x_{(r+1)^{N+1}}+\frac{1}{\phi}\}.
\end{multline}
\end{small}In summary, by (\ref{eq:case1_gammamin}) and (\ref{eq:case2_gammamin}), organization $1$ can unilaterally set the expected social welfare $\sum_{x=1}^{N} E^x$ with the MMZD strategy $\mathbf{p}^1$ meeting $\hat{\mathbf{p}}^1=\phi(\sum_{x=1}^{N} \mathbf{u}^x+\alpha_0\mathbf{1})$, with each element of $\mathbf{p}^1$ calculated by:
\begin{small}
\begin{align*}\label{eq:p_i_calculate}
p^1_h=
\begin{cases}   
\sum_{x=1}^{N} u^x_h+{\alpha_0}_{min}+1, h=1,2,...,(r+1)^{N-1},\\
\sum_{x=1}^{N} u^x_h+{\alpha_0}_{min}, h=(r+1)^{N-1}+1,...,(r+1)^{N+1},\\
\end{cases}
\end{align*}
\end{small}where $p^1_h$ denotes the $h$-th element in $\mathbf{p}^1$.
\section{Experiments}
In this section, we evaluate the performance of the MMZD strategy to maximize the social welfare based on simulation experiments. All experiments are implemented using Matlab R2021a on a laptop with 2.3 GHz Intel Core i5-8300H processor. We set $N=10$ since the number of organizations in cross-silo FL is usually small. We consider that $K=200$, $r=33$. $\theta_0=23271.584$ and $\theta_1=50193.243$ are derived from the simulation dataset. In addition, we set $\phi=0.01$. For every control group with different strategy settings, we repeat the experiment for 100 times. 
\begin{figure}[h]
\centering
  \centering
  \subfigure[MMZD]{
    \label{fig:subfig:a}
    \includegraphics[scale=0.275]{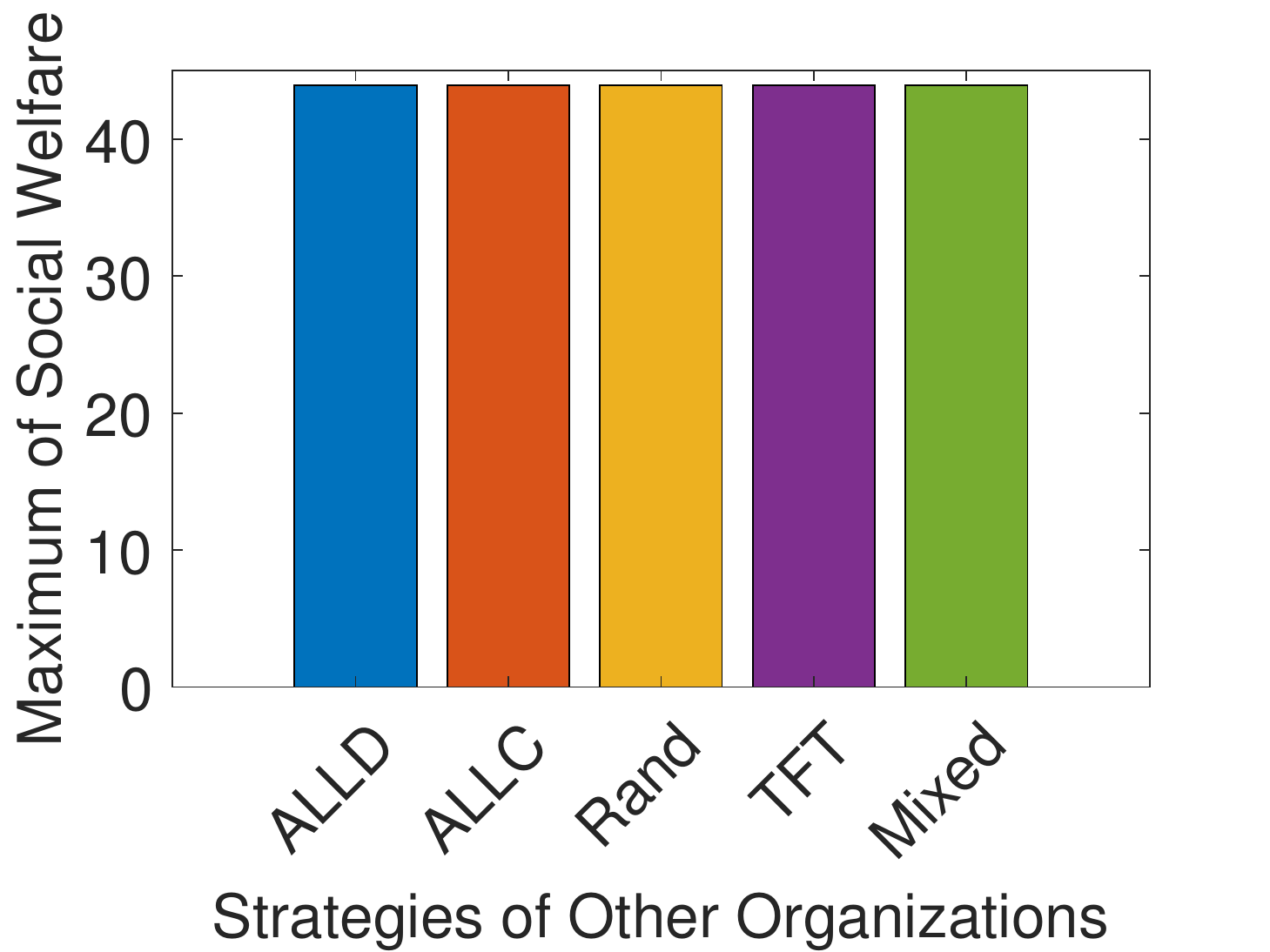}}
  \subfigure[ALLD]{
    \label{fig:subfig:b}
    \includegraphics[scale=0.275]{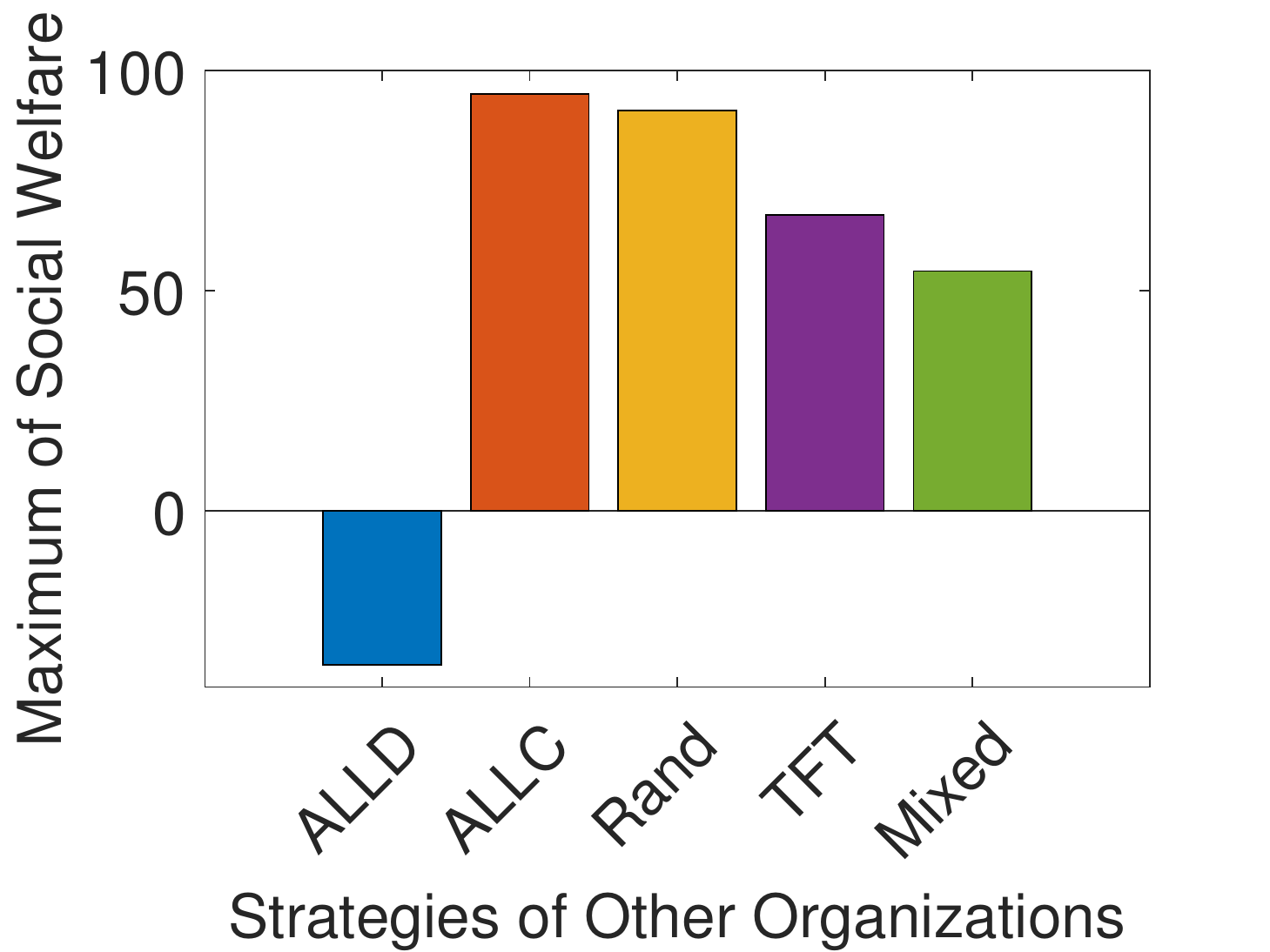}}
  \subfigure[ALLC]{
    \label{fig:subfig:c}
    \includegraphics[scale=0.275]{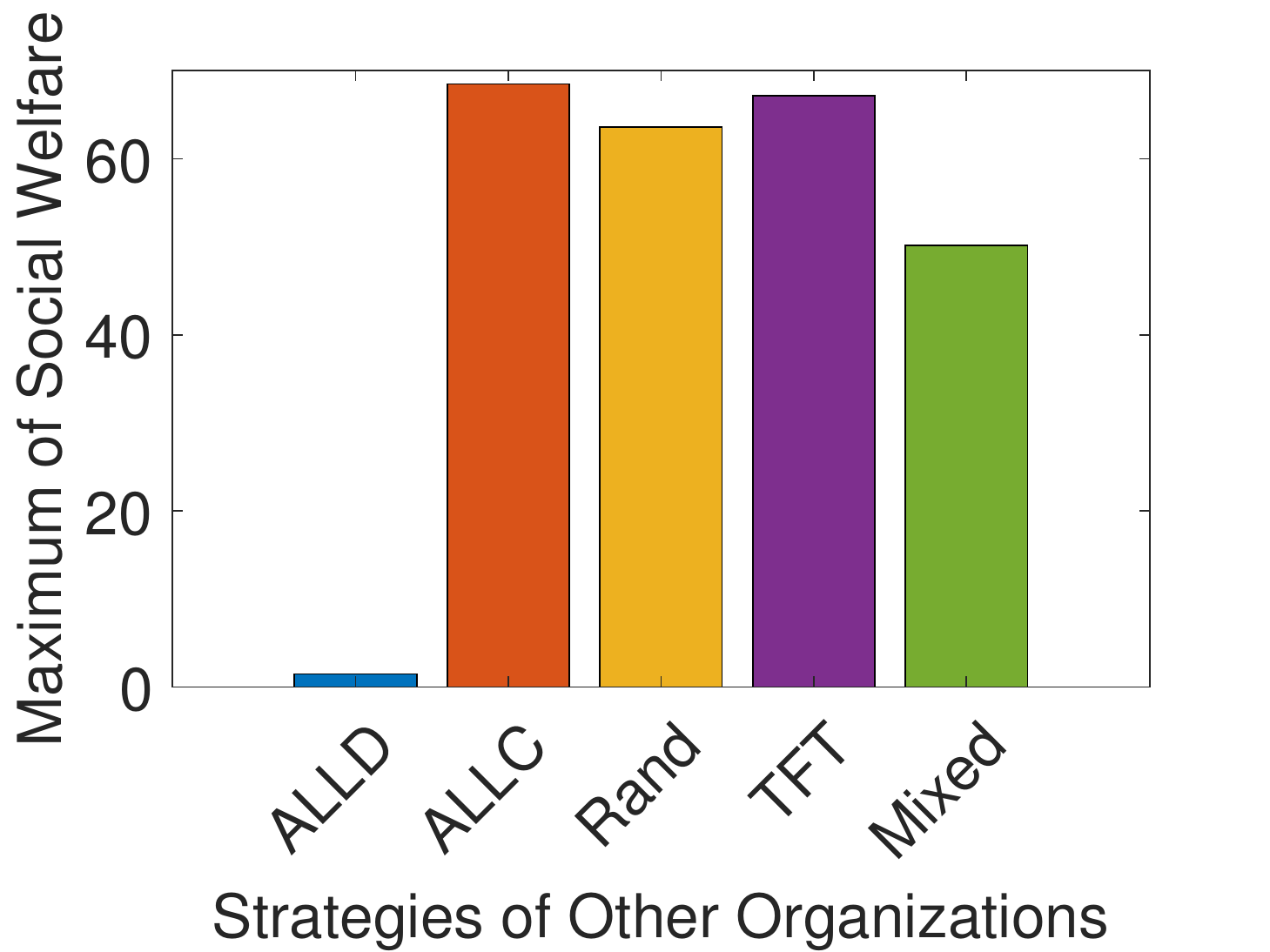}}
  \subfigure[Rand]{
    \label{fig:subfig:d}
    \includegraphics[scale=0.275]{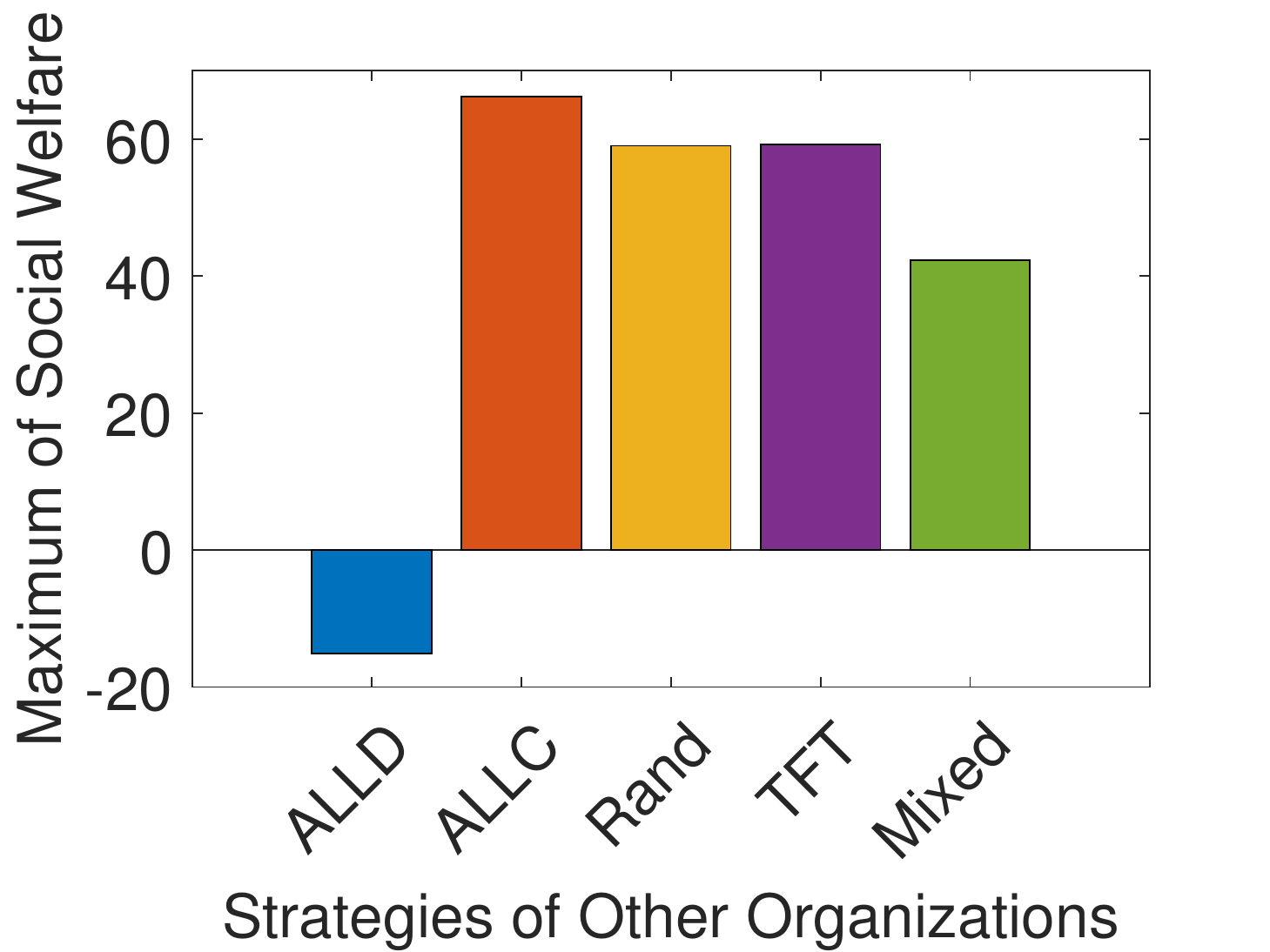}}
\caption{The maximum social welfare under different strategy combinations of organization 1 and others.}
  \label{fig:strategies} 
\end{figure}
In order to verify the effectiveness of the MMZD strategy on maximizing the social welfare, we compare it with other five classical strategies, by simulating the entire cross-silo FL process for 20 rounds of game. The numerical results of maximum social welfare denote the sum of all organizations’ utilities. In Fig. \ref{fig:strategies}, organization 1 adopts MMZD, all-defection (ALLD)\cite{hu2019quality}, all-cooperation (ALLC)\cite{hu2019quality}, and random (Rand) strategies\cite{hu2019quality}. Other organizations use ALLD, ALLC, Rand, Tit-For-Tat (TFT)\cite{nowak1993strategy}, and mixed (Mixed) strategies\cite{hu2019quality}. Specifically, ALLD strategy is defined as: the organization does not perform local training at all, only submits a zero vector in global aggregation. While ALLC strategy means that the organization participates in all $r$ global aggregation with their local updates in every game round. Organizations which adopt Rand strategy randomly participate in global aggregation from $0$ to $r$ rounds with the probability of $\frac{1}{r+1}$. TFT strategy is defined as the organizations randomly choose the number of participating global aggregation from $0$ to $\frac{\lfloor r \rfloor}{2}$ when the sum of global aggregation rounds in last game round is less than $\frac{Nr}{2}$, otherwise they randomly choose the number of participating global aggregation from $\frac{\lfloor r+1 \rfloor}{2}$ to $r$. We define the mixed strategy as adopting a specific strategy chosen from ALLC, ALLD, Rand, and TFT. By comparing Fig. \ref{fig:strategies}(a) with the other three figures, we can find that the MMZD strategy can effectively control the social welfare. This can prove that free-riding behavior is avoided in some ways.
\begin{figure}
\centering
\includegraphics[scale=0.5]{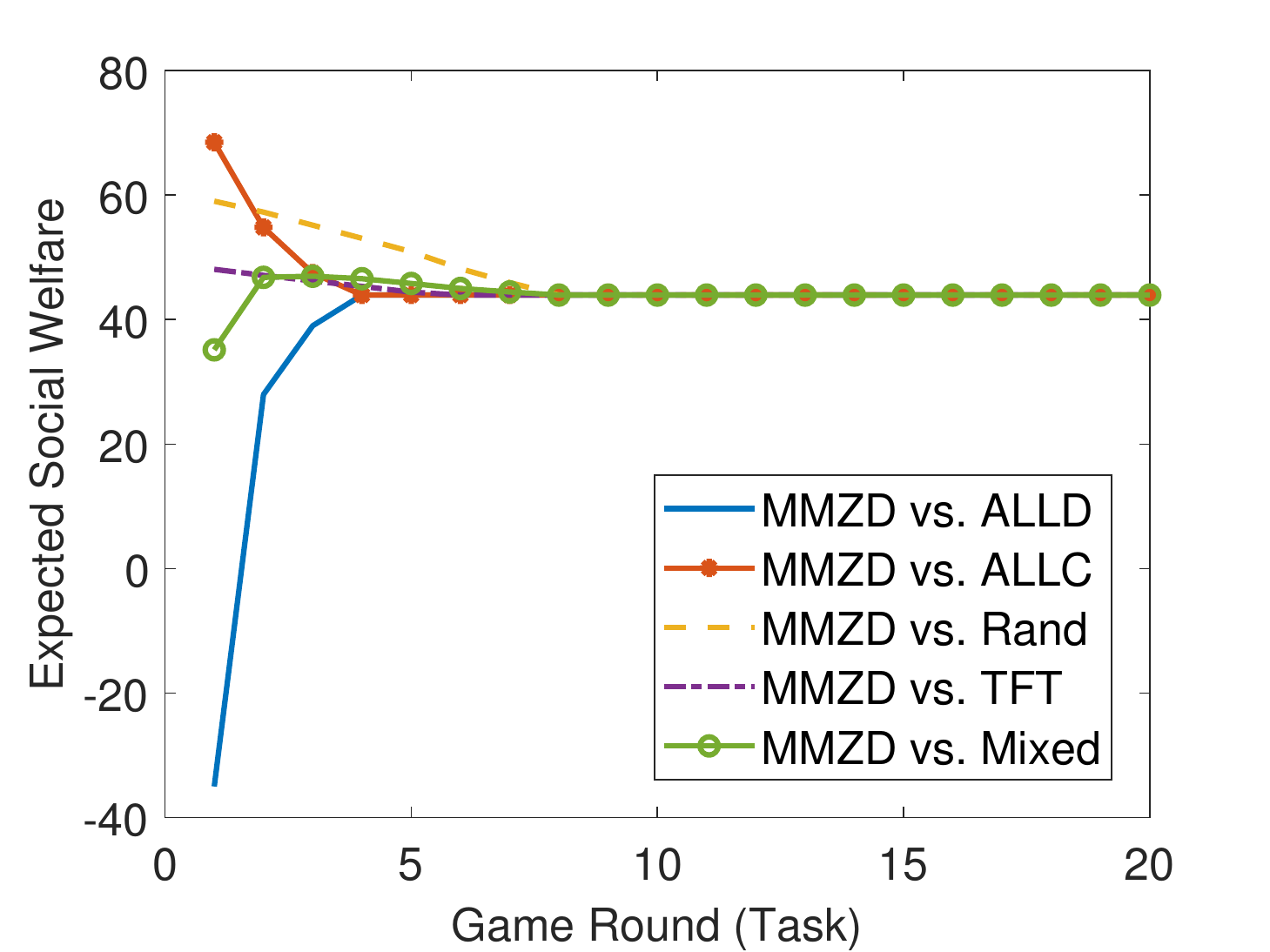}
\caption{Evolution of the expected social welfare.}
\label{fig:ZDconverge} 
\end{figure}

Fig. \ref{fig:ZDconverge} plots the expected social welfare changes in each game round, as the organization 1 adopts the MMZD strategy and other organizations employ different strategies. Fig. \ref{fig:strategies}(a) displays the final result in Fig. \ref{fig:ZDconverge}, which indicates that no matter what kind of strategies other organizations adopt, the social welfare finally converges to a fixed value, verifying the power of the proposed social welfare maximization game.

%
%
%

\section{conclusion}
In this paper, we define the cross-silo FL game among organizations as a public goods game, revealing the social dilemma in cross-silo FL theoretically. In order to overcome the social dilemma, we propose a brand-new method using the MMZD to solve the social welfare maximization problem. By the means of the MMZD, an individual organization can unilaterally control social welfare at a certain level, regardless of other organizations' strategies. Moreover, our approach can maintain the stability and sustainability of the system without extra cost. Simulation results prove that the MMZD strategy can efficiently and effectively control social welfare, which reduces the loss from selfish behaviors.
\label{sec:prior}

\vfill\pagebreak
\label{sec:refs}
\bibliographystyle{IEEEbib}
\bibliography{./refs.bib,./IEEEexample}

\begin{thebibliography}{10}

\bibitem{mcmahan2017communication}
Brendan McMahan, Eider Moore, Daniel Ramage, Seth Hampson, and Blaise~Aguera
  y~Arcas,
\newblock ``Communication-efficient learning of deep networks from
  decentralized data,''
\newblock in {\em Artificial intelligence and statistics}. PMLR, 2017, pp.
  1273--1282.

\bibitem{kairouz2019advances}
Peter Kairouz, H~Brendan McMahan, Brendan Avent, Aur{\'e}lien Bellet, Mehdi
  Bennis, Arjun~Nitin Bhagoji, Kallista Bonawitz, Zachary Charles, Graham
  Cormode, Rachel Cummings, et~al.,
\newblock ``Advances and open problems in federated learning,''
\newblock {\em arXiv preprint arXiv:1912.04977}, 2019.

\bibitem{huang2021personalized}
Yutao Huang, Lingyang Chu, Zirui Zhou, Lanjun Wang, Jiangchuan Liu, Jian Pei,
  and Yong Zhang,
\newblock ``Personalized cross-silo federated learning on non-iid data,''
\newblock in {\em Proceedings of the AAAI Conference on Artificial
  Intelligence}, 2021, vol.~35, pp. 7865--7873.

\bibitem{wang2021efficient}
Yansheng Wang, Yongxin Tong, Dingyuan Shi, and Ke~Xu,
\newblock ``An efficient approach for cross-silo federated learning to rank,''
\newblock in {\em 2021 IEEE 37th International Conference on Data Engineering
  (ICDE)}. IEEE, 2021, pp. 1128--1139.

\bibitem{nandury2021cross}
Kishore Nandury, Anand Mohan, and Frederick Weber,
\newblock ``Cross-silo federated training in the cloud with diversity scaling
  and semi-supervised learning,''
\newblock in {\em ICASSP 2021-2021 IEEE International Conference on Acoustics,
  Speech and Signal Processing (ICASSP)}. IEEE, 2021, pp. 3085--3089.

\bibitem{majeed2020cross}
Umer Majeed, Latif~U Khan, and Choong~Seon Hong,
\newblock ``Cross-silo horizontal federated learning for flow-based
  time-related-features oriented traffic classification,''
\newblock in {\em 2020 21st Asia-Pacific Network Operations and Management
  Symposium (APNOMS)}. IEEE, 2020, pp. 389--392.

\bibitem{marfoq2020throughput}
Othmane Marfoq, Chuan Xu, Giovanni Neglia, and Richard Vidal,
\newblock ``Throughput-optimal topology design for cross-silo federated
  learning,''
\newblock {\em arXiv preprint arXiv:2010.12229}, 2020.

\bibitem{zhang2020batchcrypt}
Chengliang Zhang, Suyi Li, Junzhe Xia, Wei Wang, Feng Yan, and Yang Liu,
\newblock ``Batchcrypt: Efficient homomorphic encryption for cross-silo
  federated learning,''
\newblock in {\em 2020 $\{$USENIX$\}$ Annual Technical Conference
  ($\{$USENIX$\}$$\{$ATC$\}$ 20)}, 2020, pp. 493--506.

\bibitem{heikkila2020differentially}
Mikko~A Heikkil{\"a}, Antti Koskela, Kana Shimizu, Samuel Kaski, and Antti
  Honkela,
\newblock ``Differentially private cross-silo federated learning,''
\newblock {\em arXiv preprint arXiv:2007.05553}, 2020.

\bibitem{li2021practical}
Qinbin Li, Bingsheng He, and Dawn Song,
\newblock ``Practical one-shot federated learning for cross-silo setting,''
\newblock in {\em Proceedings of the Thirtieth International Joint Conference
  on Artificial Intelligence (IJCAI-21)}, 2021.

\bibitem{long2021federated}
Guodong Long, Tao Shen, Yue Tan, Leah Gerrard, Allison Clarke, and Jing Jiang,
\newblock ``Federated learning for privacy-preserving open innovation future on
  digital health,''
\newblock {\em arXiv preprint arXiv:2108.10761}, 2021.

\bibitem{jiang2021flashe}
Zhifeng Jiang, Wei Wang, and Yang Liu,
\newblock ``Flashe: Additively symmetric homomorphic encryption for cross-silo
  federated learning,''
\newblock {\em arXiv preprint arXiv:2109.00675}, 2021.

\bibitem{tang2021incentive}
Ming Tang and Vincent~WS Wong,
\newblock ``An incentive mechanism for cross-silo federated learning: A public
  goods perspective,''
\newblock in {\em IEEE INFOCOM 2021-IEEE Conference on Computer
  Communications}. IEEE, 2021, pp. 1--10.

\bibitem{he2016zero}
Xiaofan He, Huaiyu Dai, Peng Ning, and Rudra Dutta,
\newblock ``Zero-determinant strategies for multi-player multi-action iterated
  games,''
\newblock {\em IEEE Signal Processing Letters}, vol. 23, no. 3, pp. 311--315,
  2016.

\bibitem{li2019convergence}
Xiang Li, Kaixuan Huang, Wenhao Yang, Shusen Wang, and Zhihua Zhang,
\newblock ``On the convergence of fedavg on non-iid data,''
\newblock {\em arXiv preprint arXiv:1907.02189}, 2019.

\bibitem{tran2019federated}
Nguyen~H Tran, Wei Bao, Albert Zomaya, Minh~NH Nguyen, and Choong~Seon Hong,
\newblock ``Federated learning over wireless networks: Optimization model
  design and analysis,''
\newblock in {\em IEEE INFOCOM 2019-IEEE Conference on Computer
  Communications}. IEEE, 2019, pp. 1387--1395.

\bibitem{hu2019quality}
Qin Hu, Shengling Wang, Peizi Ma, Xiuzhen Cheng, Weifeng Lv, and Rongfang Bie,
\newblock ``Quality control in crowdsourcing using sequential zero-determinant
  strategies,''
\newblock {\em IEEE Transactions on Knowledge and Data Engineering}, vol. 32,
  no. 5, pp. 998--1009, 2019.

\bibitem{nowak1993strategy}
Martin Nowak and Karl Sigmund,
\newblock ``A strategy of win-stay, lose-shift that outperforms tit-for-tat in
  the prisoner's dilemma game,''
\newblock {\em Nature}, vol. 364, no. 6432, pp. 56--58, 1993.

\end{thebibliography}

\end{document}